\begin{document}
\newtheorem{theorem}{Theorem}
\newtheorem{lemma}{Lemma}
\newtheorem{corollary}{Corollary}
\newcommand{\err}{e}
\newcommand{\li}{{\rm li}}

\title{Approximately Counting Semismooth Integers}


\author{Eric Bach}
 \address{Computer Sciences Department,
 University of Wisconsin-Madison}
 \email{bach@cs.wisc.edu}
\author{Jonathan Sorenson}
  \address{Computer Science and Software Engineering Department,
  Butler University}
  \email{sorenson@butler.edu}

\date{24 April 2013}

\maketitle

\begin{abstract}
An integer $n$ is $(y,z)$-semismooth if $n=pm$ where
  $m$ is an integer with all prime divisors $\le y$ and 
  $p$ is 1 or a prime $\le z$.
Large quantities of semismooth integers are utilized
  in modern integer factoring algorithms,
  such as the number field sieve,
  that incorporate the so-called \textit{large prime} variant.
Thus, it is useful for factoring practitioners to be able to estimate
  the value of $\Psi(x,y,z)$, the number of $(y,z)$-semismooth integers
  up to $x$, so that they can better set algorithm parameters and
  minimize running times, 
  which could be weeks or months on a cluster supercomputer.
In this paper, we explore several algorithms to approximate 
  $\Psi(x,y,z)$ using a generalization of Buchstab's identity
  with numeric integration.
\end{abstract}

%
%

\section{Introduction}

The security of the public-key cryptosystem RSA \cite{MOV,RSA78}
  is based on the practical difficulty of integer factoring.

The fastest current general-purpose integer factoring algorithm 
  is the number field sieve \cite{CP,Lenstra}, which in its basic form
  makes use of \textit{smooth numbers}, integers with only small
  prime divisors.
This has inspired research into algorithms to approximately count
  smooth numbers 
\cite{Bernstein98,Bernstein2002,HS97,PS06,Sorenson2000,Suzuki2004,Suzuki2006}.
However, most implementations of the number field sieve make use of the
  so-called \textit{large prime} variant \cite[\S6.1.4]{CP}.
So we want, in fact, to count smooth integers that admit at most one slightly
  larger prime divisor, or \textit{semismooth} numbers.
(See, for example, the details on the factorization of
  a 768-bit RSA modulus \cite{factor768bitRSA} where smoothness
  bounds are discussed near the end of \S2.2.)
\nocite{Granville2008,Moree2013}

The principal contribution of this paper is twofold:
\begin{enumerate}
  \item
    We present data showing that the key to estimating
    $\Psi(x,y,z)$ accurately is an algorithm to estimate $\Psi(x,y)$
    accurately, and
  \item
    We present head-to-head comparisons of five algorithms
    for estimating $\Psi(x,y,z)$.
\end{enumerate}

Previous work was done by Bach and Peralta \cite{BP96} and generalized by
  Zhang and Ekkelkamp \cite{Ekkelkamp2007,zhang}; we discuss this below.

This paper is organized as follows.
We begin with some definitions, and briefly discuss computing exact counts
  of semismooth integers.
We then give our main theoretical result,
  a generalized Buchstab identity, which
  together with numerical integration, is the basis of all our algorithms.
We then present five different algorithms in some detail,
  two based on the Dickman $\rho$ function and three based on
  the saddle point methods of Hildebrand and Tenenbaum \cite{HT93},
  along with empirical results for each algorithm.
As one might expect, we discover a tradeoff in algorithm choice
  between speed and accuracy.
We follow this up with an elaboration on some numerical details.

\begin{table*} 
\caption{Exact Values of $\Psi(x,y,z)$, $x=2^{40}$\label{tableexact}}
{

{
\begin{center}\begin{tabular}{l|rrrrrrr}
\hline
$y$ &
$z=2^{10}$ & $z=2^{12}$ & $z=2^{14}$ & $z=2^{16}$ & $z=2^{18}$ & $z=2^{20}$ & \\
\hline
$2^{2}$  &
58916 & 170906 & 503392 & 1500366 & 4513650 & 13597105 \\
$2^{4}$  &
 6132454 & 15111450 & 36766896 & 88920834 & 213965871 & 508848834  \\
$2^{6}$  &
 323105012 & 678707129 & 1326493628 & 2499496319 & 4603776946 & 8298713253  \\
$2^{8}$  &
 3157707079 & 6694272918 & 11837179134 & 19296840890 & 30059136386 & 45290571262  \\
$2^{10}$  &
 7138986245 & 21494669620 & 39400743040 & 61719198990 & 89501569374 & 123782024151  \\
$2^{12}$  &
 --- & 30641713551 & 68600140477 & 111769092210 & 160884758713 & 215725604647 \\
$2^{14}$  &
 --- & --- & 80324574755 & 145583683889 & 214469637137 & 286977146180 \\
$2^{16}$  &
 --- & --- & --- & 155283653287 & 241316058768 & 329068435579 \\
$2^{18}$  &
 --- & --- & --- & --- & 248857736183 & 349745847766 \\
$2^{20}$  &
 --- & --- & --- & --- & --- & 354983289990 \\
\hline
\end{tabular}
\end{center}
}
}
\end{table*}

\section{Definitions}

Let $P(n)$ denote the largest prime divisor of the positive integer $n$,
  with $P(1)=1$.
An integer $n$ is \textit{$y$-smooth} if $P(n)\le y$, and $\Psi(x,y)$
  counts the integers $n\le x$ that are $y$-smooth.

An integer $n$ is \textit{$(y,z)$-semismooth} if we can write
  $n=mp$ where $m$ is $y$-smooth and $p\le z$ is a prime or 1.
$\Psi(x,y,z)$ counts the integers $n\le x$ that are $(y,z)$-semismooth.
(Generalizations to more than one exceptional prime have been
defined by Zhang and Ekkelkamp \cite{Ekkelkamp2007,zhang}.)
Observe that $\Psi(x,y,y)=\Psi(x,y)$, the function
  $\Psi(x,y,x)$ counts integers whose second-largest prime divisor is
  bounded by $y$, and $\Psi(x,1,z) = \min\{ \pi(x), \pi(z) \}$, where
  $\pi(x)$ is the number of primes up to $x$.

Our basic unit of work is the floating point operation.  
Along with the four basic arithmetic operations ($+,-,\times,\div$) 
  we include square roots, logarithms, and exponentials, since their
  complexity is close to that of multiplication
  (see for example \cite{Brent76}).

\section{Exact Counts}

Using a prime number sieve, such as the sieve of Eratosthenes, 
  we can completely factor all integers up to $x$ in
  $O(x\log\log x)$ arithmetic operations, 
  and thereby compute exact values of $\Psi(x,y,z)$.
Of course this is not a practical approach for large $x$,
  but it is useful for evaluating the accuracy of approximation algorithms,
  which is what we do here.
So we wrote a program to do this, based on a segmented sieve of
  Eratosthenes (see \cite{Sorenson06} for prime number sieve references),
  and we ran our program up to
  $x=1099511627776=2^{40}$.
Our results for this largest value for $x$ appear in Table \ref{tableexact},
  which took just over 100 CPU hours to compute.

\section{A Generalized Buchstab Identity}

We have the following version of Buchstab's identity 
  (see for example \cite[p.~365]{Tenenbaum}):
\begin{equation}
  \Psi(x,y) = \Psi(x,2) + \sum_{2<p\le y} \Psi(x/p,p),
\end{equation}
which is obtained by summing over the largest prime divisor of
$y$-smooth integers $n\le x$.
Using this same idea gives us the following:
\begin{equation}\label{gbi}
\Psi(x,y,z) = \Psi(x,y) + \sum_{y<p\le z} \Psi(x/p,y).
\end{equation}
As one can see, the identity is obtained by summing over the
  largest prime divisor.

\begin{table*}
  \caption{ \label{tablesigma}
  $x \cdot \sigma(u,v) / \Psi(x,y,z)$, $x=2^{40}$ 
  }
  {
\begin{center}\begin{tabular}{l|lllllll}
\hline
$y$ &
$z=2^{10}$ & $z=2^{12}$ & $z=2^{14}$ & $z=2^{16}$ & $z=2^{18}$ & $z=2^{20}$  \\
\hline
$2^{2}$ &
7.0984e-14 & 1.2946e-12 & 2.1963e-11 & 3.4242e-10 & 4.8536e-09 & 6.2227e-08 & \\
$2^{4}$ &
0.0042182 & 0.007326 & 0.012701 & 0.021657 & 0.035951 & 0.058 & \\
$2^{6}$ &
0.26627 & 0.2956 & 0.32972 & 0.36872 & 0.41167 & 0.4585 & \\
$2^{8}$ &
0.64392 & 0.66898 & 0.68836 & 0.70731 & 0.72686 & 0.74754 & \\
$2^{10}$ &
0.75636 & 0.80963 & 0.82644 & 0.83778 & 0.84679 & 0.85628 & \\
$2^{12}$ &
 --- & 0.84863 & 0.87886 & 0.89096 & 0.89979 & 0.90729 & \\
$2^{14}$ &
 --- &  --- & 0.89495 & 0.9169 & 0.92614 & 0.93196 & \\
$2^{16}$ &
 --- &  --- &  --- & 0.92275 & 0.93539 & 0.9421 & \\
$2^{18}$ &
 --- &  --- &  --- &  --- & 0.93769 & 0.94722 & \\
$2^{20}$ &
 --- &  --- &  --- &  --- &  --- & 0.95043 & \\
\hline
\end{tabular}\end{center}
  }
\end{table*}

\begin{table*}
  \caption{ \label{tableekkel}
  $E(x,y,z) / \Psi(x,y,z)$, $x=2^{40}$ 
  }
  {
\begin{center}\begin{tabular}{l|lllllll}
\hline
$y$ &
$z=2^{10}$ & $z=2^{12}$ & $z=2^{14}$ & $z=2^{16}$ & $z=2^{18}$ & $z=2^{20}$ & \\
\hline
$2^{2}$ &
1.6195e-13 & 2.9202e-12 & 4.8929e-11 & 7.5256e-10 & 1.0508e-08 & 1.325e-07 & \\
$2^{4}$ &
0.0063754 & 0.010979 & 0.018858 & 0.031834 & 0.052265 & 0.0833 & \\
$2^{6}$ &
0.34328 & 0.37895 & 0.42001 & 0.46628 & 0.51656 & 0.57018 & \\
$2^{8}$ &
0.76655 & 0.79204 & 0.81087 & 0.82903 & 0.84748 & 0.86587 & \\
$2^{10}$ &
0.87052 & 0.91765 & 0.93094 & 0.93899 & 0.94491 & 0.95185 & \\
$2^{12}$ &
 --- & 0.94459 & 0.96506 & 0.97334 & 0.97836 & 0.98148 & \\
$2^{14}$ &
 --- &  --- & 0.97446 & 0.98537 & 0.98863 & 0.99038 & \\
$2^{16}$ &
 --- &  --- &  --- & 0.98694 & 0.99092 & 0.99305 & \\
$2^{18}$ &
 --- &  --- &  --- &  --- & 0.99154 & 0.99516 & \\
$2^{20}$ &
 --- &  --- &  --- &  --- &  --- & 0.99766 & \\

\hline
\end{tabular}\end{center}
  }
\end{table*}

\section{Approximate Counts}

As mentioned in the Introduction, there are many algorithms
  to estimate values of $\Psi(x,y)$.
We could choose one of them, compute a list of primes up to $z$,
  and then apply (\ref{gbi}) to approximate $\Psi(x,y,z)$.
We found that this does, in fact, give fairly accurate estimates, 
  but the resulting
  algorithms are quite slow since roughly $O(z/\log z)$
  evaluations of $\Psi(x/p,y)$ (one for each $p$, $y<p\le z$) are needed.

Our approach, then, is to replace the sum in (\ref{gbi}) 
  with an integral, and then use numeric integration 
  to evaluate it \cite[\S7.2]{CdB}; we used Simpson's rule.
We found that in practice, the relative error introduced by replacing the
  sum with an integral that was then estimated, was less than the
  relative error introduced by the approximation algorithms for $\Psi(x,y)$.

Let us define $\li(x):=\int_2^x dt/\log t$,
and let $\err(x):=\pi(x)-\li(x)$.  By the prime number theorem,
$\err(x)=x/\exp[\Omega(\sqrt{\log x})]$;
if we assume the Riemann Hypothesis, $\err(x)=O(\sqrt{x}\log x)$
\cite{Schoenfeld76}.

We have the following (see \cite[\S2.7]{BS}):
\begin{lemma} \label{lemmaBS}
  Let $f$ be a continuously differentiable function on an open
  interval containing $[2,z]$, and let $2\le y\le z$.
  Then 
  \begin{eqnarray*}
    \sum_{y<p\le z} f(p) &=& \int_y^z \frac{f(t)}{\log t}dt \\
      && +f(z)\err(z)-f(y)\err(y) -\int_y^z \err(t)f^\prime(t)dt.
  \end{eqnarray*}
\end{lemma}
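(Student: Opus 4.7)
The plan is to prove the lemma by Abel summation (partial summation), essentially viewing the sum on the left as a Stieltjes integral against $d\pi$ and then splitting $\pi = \li + \err$.

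First I would write
\[
  \sum_{y<p\le z} f(p) \;=\; \int_{y^+}^{z} f(t)\, d\pi(t),
\]
interpreting the right side as a Riemann--Stieltjes integral; since $\pi$ is a step function that jumps by $1$ at each prime, this integral simply picks up $f(p)$ for each prime $p$ in $(y,z]$. Substituting $\pi(t) = \li(t) + \err(t)$ splits this as
\[
  \int_y^z f(t)\, d\li(t) \;+\; \int_{y^+}^{z} f(t)\, d\err(t).
\]
The first piece is $\int_y^z f(t)/\log t \, dt$ because $\li$ is smooth on $[2,z]$ with derivative $1/\log t$.

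Next I would handle the second piece by integration by parts for Stieltjes integrals. Since $f$ is continuously differentiable and $\err$ is of bounded variation on $[y,z]$, the usual integration-by-parts formula gives
\[
  \int_{y^+}^{z} f(t)\, d\err(t) \;=\; f(z)\err(z) - f(y)\err(y) - \int_y^z \err(t) f'(t)\, dt.
\]
Combining these two computations yields the statement of the lemma.

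The only delicate points are bookkeeping at the endpoints: one must decide on one-sided limits of $\pi$ (and hence $\err$) so that the sum $\sum_{y<p\le z} f(p)$ is recovered with the correct open/closed endpoint conventions. This is the step I would treat most carefully, but it is routine once one fixes, say, the right-continuous version of $\pi$; the jump of $\pi$ at a prime $p$ then contributes $f(p)$ iff $p \in (y,z]$, matching the left-hand side exactly. Everything else is standard calculus, so I do not expect any genuine obstacle.
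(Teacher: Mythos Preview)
Your proof is correct and is exactly the standard partial-summation (Abel summation) argument: write the sum as a Stieltjes integral against $d\pi$, split $\pi=\li+\err$, and integrate the $\err$ piece by parts. The paper itself does not prove this lemma but merely cites \cite[\S2.7]{BS}, where the same argument is given, so your approach coincides with the intended one.
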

Of course we cannot apply this lemma to $\Psi(x,y)$ directly,
  so we use an estimate instead.
Define $\rho(u)$ as the unique continuous solution to
\begin{eqnarray*}
\rho(u) &=& 1 \qquad (0\le u\le 1) \\
\rho(u-1)+u\rho^\prime(u) &=& 0 \qquad (u>1).
\end{eqnarray*}
Note that $\rho \in C^1$ for $u > 1$.
Hildebrand \cite{Hildebrand86} proved that
for $\epsilon>0$ we have
\begin{equation} \label{xrho}
  \Psi(x,y) = x \rho(u)
   \left(1+O_\epsilon\left(\frac{\log (u+1)}{\log y}\right)\right),
\end{equation}
uniformly on the set defined by 
$1\le u\le \exp((\log y)^{3/5-\epsilon})$
and $y\ge2$.  Here, $u:=u(x,y)=\log x/\log y$.

\begin{theorem} \label{thmintegrate}
Given $2\le y\le z\le x$, let $\epsilon>0$, and assume that
$1\le \log (x/z)/\log y \le \log x/\log y \le \exp((\log y)^{3/5-\epsilon})$.
Then
\begin{equation}
   \Psi(x,y,z)=\left( \Psi(x,y)+\int_y^z \frac{\Psi(x/t,y)}{\log t} dt\right)
     (1+o(1)).
\end{equation}
For asymptotic notation, we are assuming $y$ is large.
\end{theorem}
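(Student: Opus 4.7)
The starting point is the generalized Buchstab identity (\ref{gbi}); the task is to convert the prime sum $\sum_{y<p\le z}\Psi(x/p,y)$ into $\int_y^z \Psi(x/t,y)/\log t\,dt$ up to a $1+o(1)$ factor. Lemma~\ref{lemmaBS} cannot be applied to $\Psi(x/p,y)$ directly because that is not a continuously differentiable function of $p$, so I would introduce the smooth surrogate
$$g(t) := \frac{x}{t}\,\rho\!\left(\frac{\log(x/t)}{\log y}\right).$$
By Hildebrand's estimate (\ref{xrho}), $g(t)=\Psi(x/t,y)(1+o(1))$ uniformly for $t\in[y,z]$, using the hypothesis on $u$; moreover, on this range the argument of $\rho$ lies in $[1,u-1]$, where $\rho\in C^1$, so $g$ itself is continuously differentiable on $[y,z]$.

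Applying Lemma~\ref{lemmaBS} to $g$ then yields
$$\sum_{y<p\le z} g(p) = \int_y^z \frac{g(t)}{\log t}\,dt + g(z)\err(z) - g(y)\err(y) - \int_y^z \err(t)\,g'(t)\,dt.$$
Using the ODE $u\rho'(u) = -\rho(u-1)$ one calculates
$$g'(t) = -\frac{g(t)}{t} + \frac{x\,\rho\!\left(\log(x/t)/\log y - 1\right)}{t^2\log(x/t)},$$
so $|g'(t)|\ll g(t)/t$ up to a factor that is mild in the stated parameter range. Combined with the prime number theorem estimate $\err(t)\ll t/\exp(\Omega(\sqrt{\log t}))$ (or, at the extreme end of the allowed range of $u$, a Vinogradov-strength refinement), each of the three correction terms on the right-hand side is dominated by a factor $\exp(-\Omega(\sqrt{\log y}))$ times the principal integral, and so is $o(1)$ times it.

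The main obstacle will be verifying that the principal integral is not itself too small, since the error bounds above are only relative. For this I would restrict attention to $t\in[z/2,z]$ and invoke (\ref{xrho}) to obtain a lower bound of order $(x/z)\rho(\log(x/z)/\log y)\cdot z/\log z$ for $\int_y^z g(t)/\log t\,dt$; together with the freestanding $\Psi(x,y)$ term in (\ref{gbi}), this dominates each of the errors bounded above. A final appeal to (\ref{xrho}) replaces $g(t)$ by $\Psi(x/t,y)$ inside the main integral with one more $1+o(1)$ factor, and substitution back into (\ref{gbi}) produces the stated asymptotic.
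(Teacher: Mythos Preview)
Your proposal is correct and follows essentially the same route as the paper: introduce the smooth surrogate $g(t)=(x/t)\rho(\log(x/t)/\log y)$, apply Lemma~\ref{lemmaBS} to it, bound the three correction terms using $g'(t)\sim -g(t)/t$ and the prime number theorem, and then swap $g$ back for $\Psi(x/t,y)$ via~(\ref{xrho}). The paper organizes the boundary terms slightly differently---bounding $\err(z)g(z)-\err(y)g(y)$ against $(\pi(z)-\pi(y))g(z)$ and $|\err(y)|g(y)$, which are then compared to $\sum_{y<p\le z}g(p)$ and $\Psi(x,y)$ respectively---and, for the large-$u$ end of the range where $\rho$ decays faster than the classical zero-free-region bound on $\err$, it invokes specific estimates from Tenenbaum (his Lemma~8.1 and~(61) in \S5.4) rather than the generic ``Vinogradov-strength refinement'' you allude to; but these are variations in bookkeeping within the same argument.
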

\begin{proof}
Define $f(t):= (x/t) \rho( \log(x/t)/\log y )$.
By (\ref{xrho}) we have $f(p)=\Psi(x/p,y)(1+o(1))$
  for all primes $y<p\le z$.
$f$ is differentiable and continuous, with 
  $f^\prime(t) \sim -f(t)/t$ (for large $y$).
It is then straightforward to show that
  $\err(t)f^\prime(t) = o( f(t)/\log t)$.
Also since $f$ is decreasing we can show
  $\err(z)f(z)-\err(y)f(y) 
    =o(\pi(z)-\pi(y)) f(z) + O(|\err(y)|f(y))$
  and then
    $(\pi(z)-\pi(y)) f(z)\le\sum_{y<p\le z} f(p) $
  and $|\err(y)|f(y)=o(\Psi(x,y))$.
In the case when $u$ is large, we make use of
  Lemma 8.1 and (61) from \cite[\S5.4]{Tenenbaum}.
%
We then apply Lemma \ref{lemmaBS} and use (\ref{gbi}),
  and finally substitute $\Psi(x/t,y)$ back in for $f(t)$ to complete
  the proof.
\end{proof}

It would be nice to have some function $g(y,z)$ where
\begin{equation}
  \Psi(x,y,z) \approx \Psi(x,y) \cdot g(y,z)
    \label{futurework}
\end{equation}
if this is possible.
Rewriting (\ref{gbi}) we have
\begin{equation}
  \Psi(x,y,z)=\Psi(x,y) \cdot
     \left(1+\sum_{y<p\le z} \frac{\Psi(x/p,y)}{\Psi(x,y)}\right).
\end{equation}
A very crude estimate of $\Psi(x/p,y)/\Psi(x,y)\approx 1/p$
  leads to
\begin{equation}
       \Psi(x,y,z)\approx \Psi(x,y) \cdot
         \left(1+\log( \log z/\log y)) \right).
\end{equation}
In practice, this is too crude to be useful.
However, this estimate can certainly be improved using, for example,
  Theorem 11 from \cite[\S5.5]{Tenenbaum}.
This is a possible direction for future work.

\begin{table*}
  \caption{$HT(x,y,z) / \Psi(x,y,z)$, $x=2^{40} $ \label{tableht} }
  {
\begin{center}\begin{tabular}{l|lllllll}
\hline
$y$ &
$z=2^{10}$ & $z=2^{12}$ & $z=2^{14}$ & $z=2^{16}$ & $z=2^{18}$ & $z=2^{20}$ & \\
\hline
$2^{2}$ &
1.0969 & 1.0712 & 1.0641 & 1.067 & 1.0728 & 1.0833 & \\
$2^{4}$ &
1.0592 & 1.039 & 1.031 & 1.0353 & 1.0507 & 1.0829 & \\
$2^{6}$ &
1.0414 & 1.0281 & 1.0205 & 1.024 & 1.0566 & 1.1582 & \\
$2^{8}$ &
1.0194 & 1.0165 & 1.0135 & 1.0127 & 1.0296 & 1.1338 & \\
$2^{10}$ &
1.0024 & 1.0104 & 1.0101 & 1.01 & 1.0112 & 1.0437 & \\
$2^{12}$ &
 --- & 1.0043 & 1.0104 & 1.01 & 1.0078 & 1.0107 & \\
$2^{14}$ &
 --- &  --- & 1.0046 & 1.0084 & 1.0115 & 1.0143 & \\
$2^{16}$ &
 --- &  --- &  --- & 1.0101 & 1.016 & 1.017 & \\
$2^{18}$ &
 --- &  --- &  --- &  --- & 1.0142 & 1.0116 & \\
$2^{20}$ &
 --- &  --- &  --- &  --- &  --- & 1.0083 & \\
\hline
\end{tabular}\end{center}
  }
\end{table*}

\subsection{The Method of Bach and Peralta}

The first algorithm to try would be to use the estimate
  $\Psi(x,y)\approx x\rho(u)$ from (\ref{xrho})
  and plug it into Theorem \ref{thmintegrate}.
This, in fact, is simply another way to derive the algorithm
  of Bach and Peralta \cite{BP96}.
They define
  $u:=\log x/\log y$, $v:=\log x/\log z$ and
\begin{equation} \label{sigma}
  \sigma(u,v) :=
           \rho(u) + \int_v^u (\rho(u-u/w)/w) dw.
\end{equation}
They then prove that for fixed $u,v$ and $x \rightarrow \infty$,
\begin{equation}
  \Psi(x,y,z) \approx x \sigma(u,v).
\end{equation}
We can use (\ref{gbi}) to obtain the same approximation, as follows:
  \begin{eqnarray*}
       \Psi(x,y,z)&\approx& \Psi(x,y)+\int_y^z (\Psi(x/t,y)/\log t) dt \\
           &\approx& x \cdot \rho(\log x/\log y)  \\
           &&       + \int_y^z (x/t\log t) \rho( \log(x/t)/\log y) dt \\
           &=& x \cdot \left( \rho(u) + \int_v^u (\rho(u-u/w)/w) dw \right) \\
           &=& x \cdot \sigma(u,v).
\end{eqnarray*}
We adapted code written by Peralta to compute values of the
  Dickman $\rho$ function (accurate to roughly 8 decimal digits)
  to compute $\sigma$.
Bach and Peralta discuss methods to compute both $\rho$ and $\sigma$
  in some detail in \cite{BP96}.
We present the results from this algorithm in Table \ref{tablesigma}.
As for all our algorithms, we show the ratio of what the algorithm
  produces as an estimate divided by the actual values we computed earlier.
The closer we are to 1 the better the estimates.

This algorithm is fast, but the results are not as good as we might desire.
As one might expect, they are about as good as what was found
  for estimating counts of smooth numbers using (\ref{xrho})
  in \cite{HS97}.

\subsection{Ekkelkamp's Improvement}

Ekkelkamp \cite{Ekkelkamp2007}  pointed out that $\sigma(u,v)$ could
be made more accurate by adding a quantity which, in our
notation, is
$$
\frac {(1 - \gamma) x}{\log x}
\left[
\rho(u-1) + \int_v^u \frac{\rho(u - u/w - 1)}{w-1} dw
\right].
$$
This can be derived by using the better approximation
$$
\Psi(x,y) \approx x\rho(u) + \frac {(1 - \gamma) x}{\log x}\rho(u-1),
$$
due to Ramaswami \cite[Theorem 1]{Ramaswami49}.
To get her formula, substitute $w = 1/\lambda$ in the integral; 
note that we need
the first term inside the brackets, since our definition of
semismoothness differs from hers.  
(Our ``large prime'' can be 1.)  
This correction does not require much additional effort;
essentially just one more numerical integration.
Let $E(x,y,z)$ denote this approximation.

Our experiments with this indicate that the additional term enhances
accuracy significantly when $y$ and $z$ are large.  
This is roughly the bottom corner of Table \ref{tablesigma}.  
However, the accuracy still drops off
dramatically for smaller values of $y$.  
See Table \ref{tableekkel}.

\subsection{A Saddlepoint Method}

Our third algorithm is based on Algorithm HT
for estimating $\Psi(x,y)$ presented in \cite{HS97}.
Define
\newcommand{\ubar}{\bar{u}}
\begin{eqnarray*}
  \zeta(s,y)&:=&
    \prod_{p\le y} (1-p^{-s})^{-1}; \\
  \phi(s,y)&:=&
    \log \zeta(s,y); \\
  \phi_k(s,y)&:=&
    \frac{d^k}{ds^k} \phi(s,y) \qquad (k\ge 1).
\end{eqnarray*}
The functions $\phi_k$ can be expressed as sums over primes.
Indeed, we have
\begin{eqnarray*}
\phi  (s,y) &=& -\sum_{p\le y} \log (1 - p^{-s}); \\
\phi_1(s,y) &=& -\sum_{p\le y} \frac{\log p}{p^s-1}; \\
\phi_2(s,y) &=& \sum_{p\le y}
  \frac{p^s(\log p)^2}{(p^s-1)^2}.  \\
\end{eqnarray*}
Thus, with a list of primes up to $y$, the quantities 
$\zeta(s,y)$, $\phi_1(s,y)$, and
$\phi_2(s,y)$ can be computed in $O(y/\log y)$ floating point operations.

Define
$$
HT(x,y,s) := 
\frac{ x^s \zeta(s,y) }{ s \sqrt{ 2\pi\phi_2(s,y) }},
$$
and let $\alpha$ be the unique solution to
$ \phi_1(\alpha,y)+\log x =0$.
Hildebrand and Tenenbaum proved the following \cite{HT86}:
\begin{theorem} \label{thmht}
\begin{equation}
  \Psi(x,y) = HT(x,y,\alpha) \cdot 
  \left(1+O\left(\frac{1}{u} + \frac{(\log y)}{y}\right)\right)
\end{equation}
uniformly for $2\le y\le x$.
\end{theorem}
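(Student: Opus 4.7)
My approach would follow the classical saddle-point method: represent $\Psi(x,y)$ as a contour integral of the truncated Euler product, choose $\alpha$ to place a vertical contour through the saddle point of the integrand, and approximate the resulting integral by a Gaussian. The defining equation $\phi_1(\alpha,y) + \log x = 0$ is exactly the vanishing of the derivative of the logarithm of the integrand, so the choice of $\alpha$ is forced by the method.

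First, I would start from the Perron-type identity
$$\Psi(x,y) = \frac{1}{2\pi i}\int_{\alpha - i\infty}^{\alpha+i\infty} \zeta(s,y)\,\frac{x^s}{s}\,ds,$$
interpreted with the usual truncation and jump conventions, valid because $\zeta(s,y)$ is an entire function of $s$ (it has only finitely many Euler factors) and the Dirichlet series $\zeta(s,y)=\sum_{n\in S(y)} n^{-s}$ converges absolutely for $\Re s > 0$. Existence and uniqueness of a real positive $\alpha$ solving $\phi_1(\alpha,y)+\log x=0$ follow from the facts that $\phi_1(\sigma,y)$ is strictly increasing in $\sigma$ (since $\phi_2>0$) and ranges from $-\infty$ to $0$ as $\sigma$ runs from $0$ to $\infty$.

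Second, I would write the integrand on the line $\Re s=\alpha$ as $\exp(F(\alpha+it))/(\alpha+it)$ where $F(s)=\phi(s,y)+s\log x$, and expand
$$F(\alpha+it)=F(\alpha)-\tfrac{1}{2}\phi_2(\alpha,y)t^2+O\bigl(|t|^3\max_{|u|\le t_0}|\phi_3(\alpha+iu,y)|\bigr).$$
Splitting the contour into a central segment $|t|\le t_0$ and tails $|t|>t_0$, with $t_0$ chosen roughly of order $\phi_2(\alpha,y)^{-1/2}$ times a slowly growing factor, the central piece produces the main term: extending the Gaussian to all of $\mathbb{R}$ and replacing $1/(\alpha+it)$ by $1/\alpha$ yields
$$\frac{e^{F(\alpha)}}{\alpha}\cdot\frac{1}{2\pi}\int_{-\infty}^{\infty} e^{-\phi_2(\alpha,y)t^2/2}\,dt=\frac{x^\alpha\zeta(\alpha,y)}{\alpha\sqrt{2\pi\phi_2(\alpha,y)}}=HT(x,y,\alpha).$$

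The main obstacle will be the tail bound, which is what ultimately controls the uniform error $1/u+(\log y)/y$. Here one needs a nontrivial upper estimate on $|\zeta(\alpha+it,y)|/\zeta(\alpha,y)$ for $|t|\ge t_0$, using
$$\log\frac{|\zeta(\alpha+it,y)|}{\zeta(\alpha,y)}=-\sum_{p\le y}\sum_{k\ge 1}\frac{1-\cos(kt\log p)}{k\,p^{k\alpha}}.$$
The strategy, as executed in \cite{HT86}, is to show that the prime powers $p^{k\alpha}$ are sufficiently dense for the cosine oscillations not to cancel, which gives a strictly negative upper bound whose magnitude grows with $|t|$ fast enough to defeat the $1/(\alpha+it)$ factor. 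This argument naturally splits into two regimes: when $y$ is small compared to $x$ (so $\alpha$ is bounded away from $1$ and $u=\log x/\log y$ is large) the $1/u$ contribution arises from replacing $1/(\alpha+it)$ by $1/\alpha$ in the central piece and from the cubic Taylor remainder; when $y$ is close to $x$ (so $\alpha$ is close to $1$) the $(\log y)/y$ contribution comes from the coarseness of the Euler product, whose largest factor is of size $(1-p^{-\alpha})^{-1}$ at the topmost prime. Balancing these and optimizing $t_0$ yields the stated relative error uniformly in the range $2\le y\le x$.
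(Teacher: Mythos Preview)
The paper does not itself prove this theorem; it is quoted as a result of Hildebrand and Tenenbaum and attributed to \cite{HT86}. Your sketch is a faithful outline of the saddle-point argument carried out in that reference, so there is no alternative approach in the present paper to compare against.

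One small correction to your write-up: $\zeta(s,y)$ is not entire. Each Euler factor $(1-p^{-s})^{-1}$ has simple poles at $s=2\pi i k/\log p$ for $k\in\mathbb{Z}$, so the finite product is meromorphic with poles on the imaginary axis (in particular at $s=0$). What you actually need---and what is true---is that $\zeta(s,y)$ is holomorphic in the half-plane $\Re s>0$ and that its Dirichlet series converges absolutely there; since the saddle point satisfies $\alpha>0$, this is enough for the Perron integral on the vertical line $\Re s=\alpha$. With that amendment, your outline (Perron formula, Taylor expansion of $F(s)=\phi(s,y)+s\log x$ at the saddle, Gaussian approximation on a central window, and a decay estimate for $|\zeta(\alpha+it,y)|/\zeta(\alpha,y)$ on the tails) matches the structure of the Hildebrand--Tenenbaum proof; the detailed work producing the uniform error $1/u+(\log y)/y$ is exactly the content of \cite{HT86}.
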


This gives us Algorithm HT \cite{HS97}:
\begin{enumerate}
  \item Find the primes up to $y$.
  \item Compute an approximation $\alpha^\prime$ to $\alpha$
        using binary search and Newton's method.
        Make sure that $|\alpha-\alpha^\prime|=O(1/(u\log x))$.
  \item Output $HT(x,y,\alpha^\prime)$.
\end{enumerate}
Write $HT(x,y)$ for the value output in the last step.
The running time is
$O\left( \frac{y\log\log x}{\log y} + \frac{y}{\log\log y}\right)$
floating point operations.

We simply plugged Algorithm HT into Theorem \ref{thmintegrate}
  to estimate $\Psi(x,y,z)$ using the saddle point method
  as follows:
\begin{equation}
  HT(x,y,z) := HT(x,y)+ \int_y^z (HT(x/t,y)/\log t) dt.
\end{equation}
In Table \ref{tableht} we give the results for this algorithm,
  which are quite good.
The method is, however, a bit slow.

Using the summation algorithms described in \cite{Bach2007},
  we can lower the exponent of $y$ in the running time from 1 to 2/3.  
We give some details for this in \S\ref{secATM}.
We did not implement this improvement, however, 
  because it would not change the computed results, 
  and the method of the section below is faster.

\subsection{Assuming Riemann's Hypothesis\label{htfast}}

This is the same as Algorithm HT, 
  only sums over primes ($\zeta,\phi_1,\phi_2$) above roughly $\sqrt{y}$ are 
  estimated using the prime number theorem plus
  the Riemann Hypothesis \cite{Sorenson2000,Schoenfeld76}.
It is \textit{much} faster than Algorithm HT and nearly as accurate;
  its running time is roughly $\sqrt{y}$.
Let $HT_f(x,y)$ denote the estimate this algorithm computes for
  $\Psi(x,y)$, and $HT_f(x,y,z)$ the estimate after using
  $HT_f(x,y)$ with Theorem \ref{thmintegrate}.
We present our results in Table \ref{tablehtf}.

We recommend this method.

\begin{table*}
  \caption{$HT_f(x,y,z) / \Psi(x,y,z)$, $x=2^{40} $ \label{tablehtf} }
  {
\begin{center}\begin{tabular}{l|lllllll}
\hline
$y$ &
$z=2^{10}$ & $z=2^{12}$ & $z=2^{14}$ & $z=2^{16}$ & $z=2^{18}$ & $z=2^{20}$ & \\
\hline
$2^{2}$ &
1.0969 & 1.0712 & 1.0641 & 1.067 & 1.0728 & 1.0833 & \\
$2^{4}$ &
1.0592 & 1.039 & 1.031 & 1.0353 & 1.0507 & 1.0829 & \\
$2^{6}$ &
1.0414 & 1.0281 & 1.0205 & 1.024 & 1.0566 & 1.1582 & \\
$2^{8}$ &
1.0194 & 1.0165 & 1.0135 & 1.0127 & 1.0296 & 1.1338 & \\
$2^{10}$ &
0.99773 & 1.0069 & 1.007 & 1.0073 & 1.0088 & 1.0413 & \\
$2^{12}$ &
 --- & 1.0152 & 1.0186 & 1.0171 & 1.0141 & 1.0164 & \\
$2^{14}$ &
 --- &  --- & 1.0177 & 1.0186 & 1.0203 & 1.0222 & \\
$2^{16}$ &
 --- &  --- &  --- & 1.021 & 1.0247 & 1.0246 & \\
$2^{18}$ &
 --- &  --- &  --- &  --- & 1.021 & 1.0172 & \\
$2^{20}$ &
 --- &  --- &  --- &  --- &  --- & 1.013 & \\
\hline
\end{tabular}\end{center}
  }
\end{table*}

\begin{table*} 
\caption{$S(x,y,z) / \Psi(x,y,z)$, $x=2^{40} $\label{tablesuz} }
  {
\begin{center}\begin{tabular}{l|lllllll}
\hline
$y$ &
$z=2^{10}$ & $z=2^{12}$ & $z=2^{14}$ & $z=2^{16}$ & $z=2^{18}$ & $z=2^{20}$ & \\
\hline
$2^{2}$ &
0 & 0 & 0 & 0 & 0 & 0 & \\
$2^{4}$ &
0 & 0 & 0 & 0 & 0 & 2.3032 & \\
$2^{6}$ &
0.57462 & 0.60753 & 0.64352 & 0.68488 & 0.7388 & 0.82405 & \\
$2^{8}$ &
0.90295 & 0.90963 & 0.91195 & 0.91382 & 0.92929 & 1.0201 & \\
$2^{10}$ &
0.94404 & 0.93337 & 0.92495 & 0.9175 & 0.9108 & 0.93201 & \\
$2^{12}$ &
 --- & 0.93348 & 0.90717 & 0.89212 & 0.87883 & 0.87083 & \\
$2^{14}$ &
 --- &  --- & 0.90403 & 0.87455 & 0.85967 & 0.84873 & \\
$2^{16}$ &
 --- &  --- &  --- & 0.88064 & 0.85461 & 0.83658 & \\
$2^{18}$ &
 --- &  --- &  --- &  --- & 0.85917 & 0.82827 & \\
$2^{20}$ &
 --- &  --- &  --- &  --- &  --- & 0.83353 & \\

\hline
\end{tabular}\end{center}
  }
\end{table*}

\subsection{Suzuki's Algorithm}

In successive papers, Suzuki \cite{Suzuki2004,Suzuki2006} 
  develops a very fast algorithm, with cost
   $O(\sqrt{\log x \log y})$ operations,
  using the saddle point method to estimate $\Psi(x,y)$.
This is based on good approximations for $\alpha$
  and the prime sums $\zeta,\phi_1,\phi_2$ using the prime number theorem.

For $u>1$, let $\xi$ be the positive solution to the equation
$e^\xi = 1+u\xi$, or equivalently, $\xi = \log(1 + u \xi)$.
This last equation implies that $\xi \approx \log(u\log u)$, and
can be used iteratively to evaluate $\xi$.  
(See \S\ref{secXI} for more information on this point.)


Let $\gamma = 0.57721...$ be Euler's constant.
We now define
\begin{equation}
\alpha_s := 1 - \frac{\xi}{\log y} 
\end{equation}
and
\begin{equation}
  S(x,y):=
    \frac{x^{\alpha_s} e^{\gamma+\int_0^\xi t^{-1}(e^t-1) dt}}{
    \alpha_s \sqrt{2\pi u (1+(\log x)/y)}}.
\end{equation}
Suzuki proves the following \cite[Theorem 1.1]{Suzuki2006}:
\begin{theorem}
Let $\epsilon \le 1/2$.
If $(\log\log x)^{5/3-\epsilon}<\log y<e^{-1}(1-\epsilon)\log x$, then
\begin{equation}
  \Psi(x,y) = S(x,y) (1+o(1)).
\end{equation}
\end{theorem}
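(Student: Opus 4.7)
The plan is to reduce the statement to the Hildebrand--Tenenbaum theorem (Theorem \ref{thmht}) by showing that $S(x,y)$ is asymptotically $HT(x,y,\alpha)$ under the stated hypotheses. Since $HT(x,y,s)$ is stationary in $s$ at the saddle point $\alpha$, it suffices to evaluate $HT$ at a sufficiently accurate explicit approximation $\alpha_s$; I will argue that Suzuki's $\alpha_s = 1-\xi/\log y$ is such an approximation, that $HT(x,y,\alpha_s)$ simplifies to $S(x,y)$, and that swapping $\alpha$ for $\alpha_s$ costs only a $1+o(1)$ factor.

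First I would analyze the saddle-point equation $\phi_1(\alpha,y) = -\log x$. Applying Lemma \ref{lemmaBS} with $f(t) = \log t/(t^{\alpha_s}-1)$ converts the prime sum to $-\int_2^y dt/(t^{\alpha_s}-1)$ plus a controlled error governed by the Korobov--Vinogradov estimate $\pi(t) = \li(t) + O(t\exp(-c(\log t)^{3/5}))$. Substituting $t = y^v$ together with $\alpha_s = 1-\xi/\log y$ reduces the leading integral to $\log y\cdot(e^\xi-1)/\xi$, so $\phi_1(\alpha_s,y) = -\log y\cdot(e^\xi-1)/\xi \cdot(1+o(1))$. Matching with $-\log x$ and using $u = \log x/\log y$ recovers $e^\xi = 1+u\xi$, exactly the defining relation of $\xi$, confirming that $\alpha_s$ is the correct leading approximation to $\alpha$.

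Next I would compute $\zeta(\alpha_s,y)$ and $\phi_2(\alpha_s,y)$ by the same method. For $\log\zeta(\alpha_s,y) = \sum_{k\ge 1}k^{-1}\sum_{p\le y}p^{-k\alpha_s}$, partial summation on the $k=1$ piece and the substitution $t = y^v$ yield an integral $\int_{(\log 2)/\log y}^{1} e^{v\xi}/v\,dv$; isolating the divergent $\int dv/v$ part produces $\log\log y$ plus a convergent remainder, which combines with the constant from Mertens' theorem and with the $k\ge 2$ tail to give $\log\log y + \gamma + \int_0^\xi(e^t-1)t^{-1}\,dt$. Exponentiating, $\zeta(\alpha_s,y) \sim \log y\cdot\exp(\gamma+\int_0^\xi(e^t-1)t^{-1}\,dt)$. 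An analogous partial summation gives $\phi_2(\alpha_s,y)\sim (\log y)^2\int_0^1 v e^{v\xi}\,dv$, and the closed form of this integral under $e^\xi = 1+u\xi$, together with the boundary and higher-$k$ corrections, produces $u(\log y)^2(1+(\log x)/y)$. Assembling these ingredients into $x^{\alpha_s}\zeta(\alpha_s,y)/(\alpha_s\sqrt{2\pi\phi_2(\alpha_s,y)})$ gives exactly $S(x,y)$.

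Finally, since $\partial_s\log HT(x,y,s)$ vanishes at $s=\alpha$, a Taylor expansion at the saddle shows $HT(x,y,\alpha_s) = HT(x,y,\alpha)(1+O((\alpha-\alpha_s)^2\phi_2(\alpha,y)))$, which is $1+o(1)$ once $|\alpha-\alpha_s| = o(1/(\log y\sqrt{u}))$, a condition guaranteed by the PNT error control above. Theorem \ref{thmht} then yields $\Psi(x,y) = HT(x,y,\alpha)(1+O(1/u+\log y/y)) = S(x,y)(1+o(1))$, with both HT error terms $o(1)$ in the stated range (the upper bound $\log y < e^{-1}(1-\epsilon)\log x$ keeping $u$ bounded away from $1$ so that the saddle expansion is valid, and the lower bound making $\log y/y$ negligible). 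I expect the main obstacle to be quantitative: the threshold $(\log\log x)^{5/3-\epsilon}$ is precisely what is required to make $\exp(-c(\log y)^{3/5})$ beat the inverse-polylogarithmic factors accumulated when the PNT error is propagated through the partial summations for $\phi_1$, $\zeta$, and $\phi_2$, so verifying that this exponent is sharp, rather than merely sufficient, is the delicate accounting step.
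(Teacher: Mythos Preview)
The paper does not prove this theorem; it is quoted verbatim as \cite[Theorem 1.1]{Suzuki2006} and used as a black box. There is therefore no in-paper proof to compare your proposal to.

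That said, your plan is essentially the strategy Suzuki carries out in \cite{Suzuki2004,Suzuki2006}: approximate the prime sums $\phi$, $\phi_1$, $\phi_2$ by integrals via the prime number theorem with Korobov--Vinogradov error, recognise that the saddle-point equation becomes $e^\xi=1+u\xi$ so that $\alpha_s=1-\xi/\log y$ is the leading approximation to $\alpha$, and then check that the resulting closed forms assemble into $S(x,y)$ and differ from $HT(x,y,\alpha)$ only by a $1+o(1)$ factor. Your identification of the threshold $(\log\log x)^{5/3-\epsilon}$ as the point where the PNT remainder $\exp(-c(\log y)^{3/5})$ beats the accumulated polylogarithmic losses is also on target. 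If you want to turn this plan into a full proof you should consult Suzuki directly for the bookkeeping, since several of your asymptotic claims (notably the exact form $\phi_2(\alpha_s,y)\sim u(\log y)^2(1+(\log x)/y)$ and the size of $\alpha-\alpha_s$) require more care than a sketch can convey; but nothing in the present paper will help you with those details.
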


Suzuki proposed using the midpoint method to evaluate the 
  integral $\int_0^\xi t^{-1}(e^t-1) dt$.
We used its Maclaurin series $\sum_{n \ge 1} \xi^n / (n \cdot n!)$
\cite[formulas 5.1.10 and 5.1.40]{AS}.
We write $S(x,y,z)$ for the function to estimate
  $\Psi(x,y,z)$ using $S(x,y)$ to estimate $\Psi(x,y)$
  in Theorem \ref{thmintegrate}.

Our results are presented in Table \ref{tablesuz}.
We found the algorithm to be extremely fast, but not accurate for
  small $y$, which is not surprising given the approximations used.

Earlier, in \cite{Suzuki2004}, Suzuki discussed $\hbox{HT}(x,y,\alpha_s)$
  as an approximation to $\Psi(x,y)$.  
Although this is faster than Algorithm HT by a factor of $\log\log x$, 
  it is not as accurate, and so we chose not to test its use.

\subsection{Speed}

Below we give timing results for the algorithms presented above.
This is the total time it took to compute the estimates given
  in the tables in this section.
\begin{center}
\begin{tabular}{lr}
\textit{Algorithm} & \textit{Time in Seconds} \\ \hline
Suzuki & 0.09 \\
Bach and Peralta & 0.40 \\
Ekkelkamp & 0.78 \\
$HT_f$ & 1.90 \\
$HT$ & 20.0 \\
\hline
\end{tabular}
\end{center}
We used the Gnu \texttt{g++} compiler on a unix server
  with an intel CPU.

\section{Numerical Details}

We elaborate on some details from the algorithms presented above.

\subsection{Estimates for $\alpha$}

In this subsection, we give more information about the
function $\alpha(x,y)$, defined implicitly by the equation
$\phi_1(\alpha,y) + \log x = 0$ (here $2\le y \le x$).  
In particular we will prove that
\begin{equation} \label{alphabounds}
 \frac 1 {2 \log x} \le \alpha \le 2.
\end{equation}

We prove the lower bound first.
From the prime number sum for $\phi_1$, we see that for $s > 0$,
$$
-\phi_1 \ge \frac{\log 2}{2^s - 1}.
$$
So $\alpha$, the solution to $\phi_1(\alpha,y) + \log x = 0$,
is lower bounded by the solution $\beta$ to
$$
\frac{\log 2}{2^\beta - 1} = \log x.
$$
Solving, we get
$$
\alpha \ge
\beta = \frac{\log(1 + (\log 2)/(\log x))}{\log 2}
      \ge \frac 1 {2 \log x},
$$
the last inequality holding whenever $x \ge 2$.

Next we show the upper bound.  Let $\zeta$ denote the
Riemann zeta function.
By examining their Dirichlet series, we can see that
$-\phi_1(s,y) \le -\zeta'/\zeta(s)$.  Both sides
are decreasing smooth functions of $s$ on $(0,\infty)$.
It follows that $\alpha$ is upper bounded by the 
solution to $\zeta'(s)/\zeta(s) = \log 2$, which is less
than 2.

More precise information can be found in \cite{HT86}.
In particular,
$$
\alpha = \frac{\log(1 + y/\log x)}{\log y}
         \left\{
         1 + O \left(
               \frac{ \log\log(1+y) }{\log y}
               \right)
         \right\}
$$
holds uniformly for $x \ge y \ge 2$, with the explicit
lower bound
$$
\alpha \ge \frac{\log(1 + y/(5 \log x))}{\log y}.
$$
The strength of this is similar to (\ref{alphabounds}) if
$y$ is fixed and $x \rightarrow \infty$.  

\subsection{Computing $\xi$\label{secXI}}

Here we discuss some numerical methods for solving
$e^\xi = 1+u\xi$, when $u>1$.

Let $f(x) = x - \log(1 + ux)$.  Then $f$ is convex 
on $(0,\infty)$ with a minimum at $x = 1-1/u$.
%
%
%
This gives the lower bound $\xi \ge 1 - 1/u$.
%
%
To get an upper bound, we observe that $e^x > 1 + x + x^2 / 2$,
so $\xi$ is no larger than the positive solution to
$1 + \xi + \xi^2 / 2 = 1 + u\xi$, which is $2(u-1)$.
Using binary search between these bounds, we can get
$\lfloor \xi \rfloor$ plus $d$ bits of its fraction,
with $O(\log u + d)$ evaluations of $f$.

Starting from the defining equation and taking logarithms,
we get
$$
\xi = \log (u \xi + 1).
$$
Suzuki \cite[Lemma 2.2]{Suzuki2004} proves that this iteration, 
starting from $\log u$, is linearly convergent to $\xi$.  
(Here $u > e$ is fixed.)

In practice, we can use the Newton iteration
$$
   \xi := \xi - 
          \frac{(\xi - \log(1 + u\xi))(1 + u\xi)}{1 + u(\xi-1)},
$$
starting with the upper bound $2(u-1)$.  By convexity, the iterates
decrease toward the root.  We tested values of $u$ from 2 to 1000,
and for these $u$, about 5 iterations were enough to
get machine accuracy (about 15D).
(Note that when $x$ is large, $f(x) \sim x$,
so even if we start with a large upper bound, the second iterate
will be much closer to the root.)

Newton iteration does not work well when $u$ is close to 1,
for the following reason.  Suppose that $u = 1 + \epsilon$.  Then, we
have $\epsilon + O(\epsilon^2)  < \xi < 2\epsilon$, making 
$\xi - \log(1 + u\xi)$ vanish to first order in $\epsilon$. 
Thus, when computing this factor, we will lose precision 
due to cancellation.

If special function software is available, $\xi$ can be 
expressed using the Lambert W function.  For example, in the notation
of a well known computer algebra system of Canadian origin,
$\xi = -1/u - \hbox{LambertW}(-1, -e^{-1/u}/u)$.  (The argument $-1$
indicates which branch should be used.)

\subsection{ATM Summation\label{secATM}}

The purpose of the next two subsections is to justify the remark
made earlier that the cost of evaluating the formula HT can be
lowered to $y^{2/3 + o(1)}$.  Here, the ``$o(1)$'' term includes
factors of order $\log x$, so we are implicitly assuming that
$x$ is not outrageously large.

If $f$ is a function defined on the positive integers, it is
\textit{multiplicative} if $f(mn) = f(m)f(n)$.  This is a stronger
requirement than is usual in number theory, where $m$ and $n$
need only be coprime.  The concept of an \textit{additive} function
is defined similarly; we require $f(mn) = f(m) + f(n)$.
We will call $f$ an \textit{ATM function}
(additive times multiplicative) if $f = gh$, where $g$ is
additive and $h$ is multiplicative.

The paper \cite{Bach2007} gave an algorithm that evaluates
the prime sum $\sum_{p \le y} f(p)$, with $f$ an ATM function,
in $y^{2/3+o(1)}$ steps.  
This generalized a previously known
result, also explained in \cite{Bach2007}, in which $f$ could
be multiplicative.  

We now explain how these summation algorithms can be used
to evaluate $\phi_1$ and $\log \zeta$.   The basic idea is
to approximate each of these by a ``small'' number of ATM or 
multiplicative prime sums.  With $\log \zeta$ in hand,
we can exponentiate to get $\zeta$. 

We first assume $s > 0$. 

Let us consider $\phi_1$ first.
By summing geometric series, we see that
$$
- \phi_1 = \sum_{p \le y} \frac{\log p}{p^s - 1}
         = \sum_{k \ge 1} \sum_{p \le y} \frac{\log p}{p^{ks}}.
$$
Note that each inner sum involves an ATM function.  We will
restrict the outer sum to $1 \le k \le N$, and choose $N$ to 
make the truncation error,
$$
\sum_{k \ge N+1}
\sum_{p \le y} \frac{\log p}{p^{ks}}.
$$
small.

If we interchange the order of summation, allow all $p \ge 2$,
and sum geometric series, we can express the truncation error
as
$$
\sum_{p \ge 2} \frac{\log p}{p^{Ns}(p^s - 1)}.
$$
Using the globally convergent Maclaurin series for $p^s$, we
see that $1/(p^s - 1) \le 1/(s \log p)$.  If we plug this in, 
the $\log p$ factors cancel and we get the upper bound
$$
\frac 1 s \sum_{p \ge 2} \frac 1 {p^{Ns}}
\le
\frac 1 s \left( \frac 1 {2^{Ns}} +
\int_{2}^\infty \frac {dt} {t^{Ns}} \right)
\le
\frac 3 {s 2^{Ns}},
$$
provided that $Ns \ge 2$.
For us, $s \ge 1/(2 \log x)$, and with this additional 
assumption we get
$$
\hbox{[truncation error]} \le \frac {6 \log x} {2^{Ns}}.
$$

Thus, to achieve truncation error less than $2^{-d}$, we can use
$N = \Theta((\log x)(\log\log x + d))$.  

Similarly, we can use
$$
\log \zeta(s,y) = - \sum_{p \le y} \log(1 - p^{-s})
= \sum_{k \ge 1} \frac 1 k \sum_{p \le y} \frac 1 {p^{ks}}.
$$
Now each inner sum involves a multiplicative function.
If we use only the inner sums with $k \le N$, a similar
analysis shows that choosing $N = \Theta((\log x)(\log\log x + d))$
will keep the truncation error below $2^{-d}$.

\subsection{Numerical Differentiation}

\def \ld {\hbox{ld}}

In this subsection, we explain how to evaluate
$$
\phi_2(s,y) = 
\sum_{p \le y} \frac {\log^2 p \cdot p^s}{(p^s - 1)^2},
$$
for use in Algorithm HT.  There is no obvious way
to reduce this to the kind of sums treated in \cite{Bach2007},
so we will approximate it by a difference.

%
%
%

Using balanced numerical differentiation \cite[p. 297]{CdB},
we have
$$
\phi_2(s,y) = 
\frac{\phi_1(s+h,y) - \phi_1(s-h,y)}{2h} + \epsilon,
\qquad
\epsilon = \frac{h^2}{6} \phi_4(\eta,y)
$$
for some $\eta \in [s-h, s+h]$.  Let us determine
how much precision will be necessary to deliver $d$ bits of
$\phi_2(s,y)$ accurately, when we use this formula.

After differentiating the sum for $\phi_2$ twice, we see that
$$
\phi_4(s,y) 
= \sum_{p \le y} \frac {p^s \log^2 p}{(p^s - 1)^2} 
\times
\frac {(p^{2s} + 4p^s + 1) \log^2 p}{(p^s - 1)^2} .
$$
Observing that $(t^2 + 4t + 1)/(t-1)^2$ is decreasing for $t > 1$, 
we get the estimate
$$
\frac{h^2}{6} \phi_4(s,y)
\le \frac{h^2 \log^2 y}{6}
\left(\frac {2^{2s} + 4\cdot 2^s + 1}{(2^s - 1)^2}\right) \phi_2(s,y).
$$
If $0 \le s \le 2$, the factor in parentheses is bounded by
$15/s^2$.  (Numerically, anyway.) So, if we could use exact 
arithmetic, numerical differentiation would give us
$$
\hbox{[relative error]} \le \frac 5 2 \ \frac{h^2 \log^2 y}{s^2}
                        \le 10 h^2 \log^2 x \log^2 y.
$$

However, we don't have exact arithmetic, so we must also analyze the
loss of precision due to cancellation.  For this, we use an ad hoc
theory.  When $h$ is small, the number of bits lost, when using the 
balanced difference formula to compute $\phi_1'(s)$, is about
$$
-\log_2 \left| \frac{\phi_1(s+h) - \phi_1(s-h)} {\phi_1(s)} \right|,
$$
since dividing by $h$ causes no loss of precision.  
Note that this is a centered version of the usual relative error formula.
Since $\phi_1(s+h) - \phi_1(s-h) \sim 2h \phi_1'(s)$, we must bound 
the logarithmic derivative of $\phi_1$, or, what is the same thing, 
relate $\phi_2 = \phi_1'$ to $\phi_1$.

In our case, we require a lower bound 
for $\phi_2$.  
Then since $t/(t-1)$ is decreasing
on $(1,\infty)$,
$$
\phi_2(s,y) 
= \sum_{p \le y} \frac{\log p}{p^s - 1} 
\times
\frac {p^s \log p}{p^s - 1}
$$
$$
\ge
\sum_{p \le y} \frac{\log p} {(p^s - 1)} \log p 
\ge
-\phi_1(s,y)\log 2.
$$
%
%
%
Therefore,
$$
\phi_1(s+h,y) - \phi_1(s-h,y)
\sim
2h \phi_2(s,y) \ge - \phi_1(s,y) \cdot 2 h \log 2,
$$
as $h \rightarrow 0$,

Let us now translate these results into practical advice.
Suppose our goal is to obtain $\phi_2$ to $d$ bits of precision,
in the sense of relative error.  By the exact arithmetic formula,
we should choose $h \le 2^{-d/2} / (\sqrt{10} \log^2 x)$.
Then we need to use $\log_2 h^{-1} + O(1)$ guard bits in 
our computation.  Put more crudely, unless $x$ is very large,
doubling the working precision should be enough, if 
we select $h$ properly.

The following example indicates that the theory above
is roughly correct.  Suppose we want 10 digits of 
$\phi_2$, for $x = 10^6$, $y = 10^3$, and
$s = 1/(2 \log x) = 0.03619...$\ .  Our recipe
allows us to take $h = 10^{-7}$.  With
17 digit arithmetic, we obtained
\begin{eqnarray*}
\mbox{numerical derivative} & =&   127790.77386350000 \\
\mbox{summation for $\phi_2$} & =& 127790.77386041727 
\end{eqnarray*}
which agree to 11 figures.

\section{Conclusion and Future Work}

In summary, we recommend the
  estimate $HT_f(x,y,z)$ of \S\ref{htfast} for approximating $\Psi(x,y,z)$.
We feel it gives high accuracy while retaining sufficient speed to
  be very practical.

For future work, we hope to generalize our results
  to $2$ or more large primes.
We also hope to further examine estimates of the form of (\ref{futurework}).

\section{Acknowledgments}
We want to thank the referees, whose comments helped improve this paper.

Abstract presented at the AMS-MAA Joint Mathematics Meetings,
January 2012, Boston MA,
and at the CMS Summer Meeting, June 2013, Halifax Nova Scotia.  

Supported in part by grants from the 
Holcomb Awards Committee, NSF (CCF-635355), and ARO (W911NF-09-1-0439).

\bibliographystyle{abbrv}

%
\end{document}